\documentclass[aps,pra,reprint,notitlepage,nobalancelastpage]{revtex4-1}


\usepackage[utf8]{inputenc} 
\usepackage{graphicx}  

\usepackage{amsmath,amsthm}
\usepackage{bm}  
\usepackage{bbm}

\usepackage[bitstream-charter]{mathdesign}
\usepackage[T1]{fontenc}

\usepackage[margin=2cm,a4paper]{geometry}
\usepackage{verbatim}
\usepackage{xcolor}
\usepackage{units}

\usepackage[pdftex,bookmarks,colorlinks,breaklinks]{hyperref}  
\definecolor{dullmagenta}{rgb}{0.4,0,0.4}   
\definecolor{darkblue}{rgb}{0,0,0.4}
\hypersetup{linkcolor=red,citecolor=blue,filecolor=dullmagenta,urlcolor=darkblue} 
\usepackage{pdfsync}  

\usepackage{tikz}
\usepackage{pgfplots}
\pgfplotsset{compat=1.3}
\pgfplotsset{width=0.5*\textwidth}

\usepackage{braket}

\def\tr{{\rm Tr}}

\def\eps{\epsilon}

\usepackage{xspace}
\def\etal{\emph{et al.\xspace}}

\newtheorem{theorem}{Theorem}

\newtheorem{prop}{Proposition}

\begin{document}
\title{Work Cost of Thermal Operations in Quantum and Nano Thermodynamics}
\author{Joseph M. Renes}
\affiliation{Institute for Theoretical Physics, ETH Zurich, 8093 Zurich, Switzerland}

\begin{abstract}
Adopting a resource theory framework of thermodynamics for quantum and nano systems pioneered by Janzing \etal\ [Int.\ J.\ Th.\ Phys.\ {\bf 39}, 2717 (2000)], we formulate the cost in useful work of transforming one resource state into another as a linear program of convex optimization. This approach is based on the characterization of \emph{thermal quasiorder} given by Janzing \etal\ and later by Horodecki and Oppenheim [Nat.\ Comm.\ {\bf 4}, 2059 (2013)]. Both characterizations are related to an extended version of  majorization studied by Ruch, Schranner, and Seligman under the name \emph{mixing distance} [J.\ Chem.\ Phys.\ {\bf 69}, 386 (1978)].
\end{abstract}

\maketitle

\section{Introduction}
The recent advances in control of quantum and nano systems raise the question of the applicability of conventional thermodynamics in these new regimes. One promising approach to tackling this question is to regard thermodynamics as a \emph{resource theory} and then study this resource theory at the quantum level in order to determine which aspects of conventional thermodynamics persist in the new setting. 
From this point of view, the essence of thermodynamics is that not all transformations of physical systems are practically possible, and that this limitation gives rise to the notion of some physical systems being more \emph{useful} than others, in that they can be used to create the other states by the allowed operations. This is roughly the approach taken by Lieb and Yngvason to better understand the foundations of classical thermodynamics~\cite{lieb_physics_1999}.

In this paper we follow the related approach to thermodynamics as a resource theory in the quantum setting described by Janzing \etal\ \cite{janzing_thermodynamic_2000} and used by Brand\~ao \etal\ \cite{brandao_resource_2011} and Horodecki and Oppenheim~\cite{horodecki_fundamental_2011}. Here, the systems under consideration are explicitly treated in the framework of quantum mechanics, and transformations take the form of unitary operators. The resource theory specifies that only those transformations are allowed which commute with the Hamiltonians of the systems involved, and the only states which can be created at will are equilibrium Gibbs states at a fixed background inverse temperature $\beta$. The ultimate resource of the resource theory turns out to be useful work, free energy~\cite{brandao_resource_2011,horodecki_fundamental_2011}. Note that the resource theory generally applies to arbitrary systems, and is not restricted to, for instance, resources which are $n$-fold copies of a single-system state.

     Our contributions to the resource theory of thermodynamics are twofold. We first point out that conditions on the quasiorder of quasiclassical resources (resources in stationary states) described by Janzing \etal\ is in fact equivalent to the conditions found by Horodecki and Oppenheim, which they called \emph{thermomajorization}, and that both are manifestations of the \emph{mixing distance} of Ruch \etal~\cite{ruch_mixing_1978}. We then consider the question of the cost, in useful work, of transforming one quasiclassical resource state into another, and show that the quasiorder formulation of Janzing \etal\ provides a simple means to determine the work cost as a problem of convex optimization, a linear program. This problem was studied in a different setting by Egloff \etal~\cite{egloff_laws_2012}; an advantage of the present treatment is a significantly simpler proof. As a special case, our formulation recovers both the work value (or work cost) of a given resource state found by Horodecki and Oppenheim~\cite{horodecki_fundamental_2011} as well as the work cost of erasure, Landauer's principle~\cite{landauer_irreversibility_1961}.

\section{Thermal Quasiorder}
Any resource theory is defined by the allowed transformations and state preparations. In the thermodynamic setting, the allowed thermal operations are any energy-preserving, unitary actions on systems, plus the creation of Gibbs states at a fixed (inverse) temperature $\beta$, for any desired Hamiltonian~\cite{janzing_thermodynamic_2000,brandao_resource_2011}. 
Resources in this theory will be denoted $R=(\rho,\gamma)$ where $\rho$ is the state of the resource system, while $\gamma$ is the Gibbs state at temperature $\beta$ of the resource system. The unitary action is meant to describe any procedure that could in principle be performed, including those which call for manipulating the energy levels of the system by external fields or the use of interaction Hamiltonians forth; Ref.~\cite{brandao_resource_2011} describes more explicitly how these can be incorporated into the unitary model. 

Thermal operations generate a quasiorder of resource states: If a resource $R$ can be transformed into some other state $\tilde{R}$ by means of thermal operations, then we write $R\succ \tilde{R}$. The Gibbs state itself is the ``lowest'' state in the quasiorder. In particular, the thermal operations defined above are those given in Definition 7 of Janzing~\etal~\cite{janzing_thermodynamic_2000}, which envisions energy preserving transformations on three systems, the first in the state $\rho$ of the input resource, the second a heat bath, and the third the target system in its Gibbs state. Then, $R\succ\tilde{R}$ if there exists a $U^{ABC}$ such that 
\begin{align}
\tr_{AB}[U^{ABC}(\rho^A\otimes\hat{\gamma}^{B}\otimes\tilde{\gamma}^C)U^{\dagger ABC}]=\tilde{\rho}.
\end{align}

Observe that 
we do not attempt to transform $\rho$ ``directly'' into $\tilde{\rho}$, i.e.\ in the same state space. Instead, we use the heat bath to effect the transformation $\rho\otimes \tilde{\gamma}\rightarrow \eta\otimes \tilde{\rho}$, where the exhaust state $\eta$ is arbitrary. This accounts for differences in the overall zero of energy between two Hamiltonians: Given resource $R$ with Hamiltonian $H$, we can create $R'$ with Hamiltonian $H'=H+c$ by the thermal operation 
which simply swaps $A$ and $C$. 

In the case of quasiclassical resources, those which commute with the Hamiltonian and are therefore stationary states, Janzing \etal\ give the following complete characterization of the quasiorder. Only the eigenvalues of stationary states are relevant, so in this context we write $R=(p,g)$ with $p$ the eigenvalues and $g$ the Gibbs state probabilities, both interpreted as column vectors. With $e_n$ the length-$n$ column vector of $1$s, they show
\begin{theorem}[{\cite[Theorem 5]{janzing_thermodynamic_2000}}]
\label{thm:quasiorder}
Consider two quasiclassical resource states $R=(p,g)$ and $R'=(p',g')$, with dimensions $n$ and $n'$, respectively. Then $R\succ R'$ if and only if there exists an $n'\times n$ matrix $G$ such that 
\begin{enumerate}
\item $Gp=p'$
\item $Gg=g'$
\item $e_{n'}^{\rm T}G=e_n^{\rm T}$.
\end{enumerate}
\end{theorem}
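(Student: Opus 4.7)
My plan is to prove the two implications separately; the necessity direction is a short calculation, whereas the sufficiency direction requires the explicit construction of an energy-preserving unitary from the abstract stochastic data.

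For the necessity direction, assume $R\succ R'$ via some energy-preserving unitary $U^{ABC}$. I would package the thermal operation into a CPTP channel from $A$ to $C$,
\begin{equation}
\mathcal{E}(X) := \tr_{AB}\bigl[U^{ABC}(X\otimes\hat\gamma^{B}\otimes\tilde\gamma^{C})U^{\dagger ABC}\bigr],
\end{equation}
and define the candidate matrix $G_{ji}:=\langle j|\mathcal{E}(\ketbra{i})|j\rangle$ with respect to the energy eigenbases of $H^{A}$ and $H^{C}$. Condition~1 then drops out of $\mathcal{E}(\rho)=\tilde\rho$ by reading off the $j$-th diagonal entry, using that $\rho$ and $\tilde\rho$ are stationary and therefore automatically diagonal in these bases. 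Condition~2 follows by feeding $\gamma^{A}$ into $\mathcal{E}$: the total input becomes the composite Gibbs state $\gamma^{ABC}$, which the energy-preserving $U$ leaves invariant, so $\mathcal{E}(\gamma^{A})=\tilde\gamma^{C}$, and the diagonals yield $Gg=g'$. Condition~3 is just trace preservation, $\sum_j G_{ji}=\tr\mathcal{E}(\ketbra{i})=1$.

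For the sufficiency direction, given $G$ satisfying the three properties I would construct an energy-preserving unitary whose reduced action realizes $R\to R'$. The strategy is to take a heat bath $B$ with a spectrum rich enough that every required $A\to C$ energy mismatch $E^{A}_{i}-E^{C}_{j}$ is absorbed by some bath transition $E^{B}_{b}\to E^{B}_{b'}$; the basis vectors $|i,b,j'\rangle$ and $|i',b',j\rangle$ then lie in a common degenerate eigenspace of $H^{A}+H^{B}+H^{C}$ and may be exchanged by a unitary commuting with the total Hamiltonian. Within each total-energy shell I would specify a block of $U$ whose matrix elements, weighted against the Boltzmann factors in $\hat\gamma^{B}$ and $\tilde\gamma^{C}$, reproduce the transition probabilities $G_{ji}$ after tracing out $AB$ against $\rho\otimes\hat\gamma^{B}\otimes\tilde\gamma^{C}$. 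Condition~2 is precisely what makes this weighting self-consistent across shells, while condition~3 is what guarantees that the assembled blocks extend to a full unitary rather than only a sub-isometry.

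The principal obstacle is this block construction: from the abstract numbers $G_{ji}$ one must partition the bath into sectors whose Boltzmann weights exactly reproduce the required amplitudes. The cleanest route I foresee is to decompose $G$ as a convex combination of extreme points of the polytope of column-stochastic matrices satisfying $Gg=g'$, realize each extreme point by a simple thermal operation with a tailored bath, and then recover the mixture by coupling in an auxiliary classical register whose marginal carries the convex weights. A useful sanity check is the case $H^{A}=H^{C}$ with $g=g'$, where conditions~2--3 collapse to ``doubly stochastic with respect to $g$'' and recover the familiar realization of Gibbs-preserving stochastic matrices as noisy thermal operations.
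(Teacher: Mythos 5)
First, a point of reference: the paper does not prove this statement itself --- it is imported verbatim as Theorem 5 of Janzing \etal\ --- so your proposal can only be measured against that source. Your necessity direction is correct and is essentially the standard argument: defining $\mathcal{E}$ as the induced channel from $A$ to $C$ and $G_{ji}=\langle j|\mathcal{E}(\ketbra{i})|j\rangle$, condition 1 follows from linearity together with the stationarity (hence diagonality in a common energy eigenbasis) of $\rho$ and $\tilde\rho$, condition 2 from the invariance of the global product Gibbs state under an energy-preserving unitary, and condition 3 from trace preservation. That half is complete.

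The sufficiency direction is where all of the content of Janzing \etal's Theorem 5 lies, and your proposal leaves it as an acknowledged ``principal obstacle'' rather than resolving it. Moreover, the specific route you propose --- decompose $G$ into extreme points of the Gibbs-stochastic polytope and realize each extreme point exactly by a thermal operation --- aims at a statement that is stronger than the theorem and is, in general, false: for $n\geq 3$ the set of stochastic matrices actually induced by thermal operations is a strict subset of the Gibbs-stochastic matrices, so one cannot hope to reproduce an arbitrary prescribed $G_{ji}$, extreme point or not. The theorem only requires that the state transition $p\to p'$ be achievable, and the standard proof exploits precisely this freedom: approximate the Gibbs weights by rationals, embed the system into a larger one in which each energy level is split into a number of degenerate sublevels proportional to its Boltzmann weight so that the embedded Gibbs state is uniform on its support, observe that under this embedding ``there exists Gibbs-stochastic $G$ with $Gp=p'$'' becomes ordinary majorization of the embedded distributions, and then realize the required bistochastic map as a mixture of permutations acting within degenerate total-energy shells of system plus bath, which dilates to an energy-conserving unitary; a continuity argument handles irrational weights. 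Your sketch contains the right raw ingredients (degenerate energy shells, permutations within them) but omits the embedding step that makes the bookkeeping tractable, and as written the target of your construction is the wrong object. Your closing sanity check is also slightly off: with $g=g'$ non-uniform the conditions give a $g$-preserving stochastic matrix, not a doubly stochastic one, and noisy operations realize the latter only for trivial Hamiltonians.
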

\noindent The third condition fixes $G$ to be a stochastic matrix, i.e.\ one whose column sums are all unity. We shall call such stochastic matrices which preserve the Gibbs state \emph{Gibbs-stochastic}. 

Horodecki and Oppenheim~\cite{horodecki_fundamental_2011} formulate a similar, and as we shall see, equivalent result, which they term \emph{thermomajorization} due to its close connection with usual majorization. Indeed, the formulation of Janzing \etal\ is a generalization of the notion of \emph{$d$-majorization} by Veinott~\cite{veinott_least_1971} and is an instance of the \emph{mixing distance} of Ruch \etal~\cite{ruch_principle_1976,ruch_mixing_1978}. Marshall \etal~\cite{marshall_inequalities:_2009} provides a nice overview of known results involving $d$-majorization.

An important question regarding the thermal quasiorder is to find functions which preserve the order, called thermal monotones. One class is given by the $f$-divergences~\cite{csiszar_informationstheoretische_1963,morimoto_markov_1963,ali_general_1966}:
\begin{prop}
\label{thm:monotone}
All functions $\phi$ of the following form, with convex $f$, preserve the thermal quasiorder:
\begin{align}
\label{eq:monotone}
\phi(R)=\sum_i g_i \,f\!\left(\frac{p_i}{g_i}\right).
\end{align}
\end{prop}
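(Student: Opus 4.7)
The plan is to prove monotonicity by invoking Theorem~\ref{thm:quasiorder} to represent the transition $R \succ R'$ by a Gibbs-stochastic matrix $G$, and then applying Jensen's inequality to the convex function $f$. This is the classical argument that $f$-divergences contract under stochastic maps that preserve the reference measure, adapted here to the column-stochastic convention used in the paper.

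First I would unpack the three conditions on $G$: $p'_j = \sum_i G_{ji} p_i$, $g'_j = \sum_i G_{ji} g_i$, and $\sum_j G_{ji} = 1$ for each $i$. Assuming $g'_j > 0$ (which holds since Gibbs states have strictly positive entries), I would rewrite the ratio as
\begin{align}
\frac{p'_j}{g'_j} = \sum_i \frac{G_{ji} g_i}{g'_j}\cdot\frac{p_i}{g_i},
\end{align}
and observe that the coefficients $w_{ji} := G_{ji} g_i / g'_j$ are nonnegative and satisfy $\sum_i w_{ji} = 1$ by condition 2. Thus $p'_j/g'_j$ is a convex combination of the ratios $p_i/g_i$.

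Applying Jensen's inequality to the convex function $f$ then yields
\begin{align}
f\!\left(\frac{p'_j}{g'_j}\right) \le \sum_i \frac{G_{ji} g_i}{g'_j}\, f\!\left(\frac{p_i}{g_i}\right).
\end{align}
Multiplying by $g'_j$, summing over $j$, and using the column-stochastic condition $\sum_j G_{ji}=1$ from item~3 of Theorem~\ref{thm:quasiorder} gives
\begin{align}
\phi(R') = \sum_j g'_j f\!\left(\frac{p'_j}{g'_j}\right) \le \sum_i \left(\sum_j G_{ji}\right) g_i f\!\left(\frac{p_i}{g_i}\right) = \phi(R),
\end{align}
which is the desired monotonicity.

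I do not anticipate a hard step; the only minor issue is handling zeros, which requires the standard conventions $0\cdot f(0/0) = 0$ and $0\cdot f(a/0) = a\lim_{t\to\infty} f(t)/t$ for the $f$-divergence to be well-defined, plus noting that Gibbs-state entries $g_i, g'_j$ are strictly positive so the weights $w_{ji}$ are unambiguous. Everything else is a one-line invocation of Theorem~\ref{thm:quasiorder} followed by Jensen.
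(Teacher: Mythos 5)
Your proof is correct: the decomposition of $p'_j/g'_j$ as a convex combination with weights $G_{ji}g_i/g'_j$ (summing to one by the condition $Gg=g'$), followed by Jensen's inequality and the column-stochasticity of $G$, is exactly the standard contraction argument for $f$-divergences. The paper omits its proof, citing it as a simple variation of Ruch and Mead's Theorem~1, which is precisely this argument, so your approach matches the intended one; your remark that the Gibbs weights $g_i, g'_j$ are strictly positive correctly disposes of the only potential degeneracy.
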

\begin{proof}
The proof is a simple variation of an argument employed by Ruch and Mead~\cite[Theorem 1]{ruch_principle_1976}, which we omit here.
	 \end{proof}

Well-known examples of such thermal monotone functions are the relative entropies $D(p||g)=\sum_i p_i\log \frac{p_i}{g_i}$ and $D(g||p)$, which stem from $f(x)=x \log x$ and $f(x)=-\log x$, respectively, as well as the Renyi divergences $D_\alpha(p||g)=\frac1{1-\alpha}\log\sum_i p_i^\alpha g_i^{1-\alpha}$ with $\alpha\geq 0$, which follow from $f(x)=(x^\alpha-1)/(\alpha-1)$. 

Importantly, suitable subclasses of convex functions actually characterize the thermal quasiorder, as formalized in the following theorem by Ruch, Schranner, and Seligman,
\begin{theorem}[\cite{ruch_mixing_1978}]
\label{thm:altdef}
For resources $R$ and $R'$ let $r_i=p_i/g_i$ and $r'_i=p'_i/g'_i$.  Then the following are equivalent:
\begin{enumerate}
\item[\emph{(a)}]There exists a Gibbs-stochastic $G$ such that $Gp=p'$.
\item[\emph{(b)}]$\phi(R')\leq \phi(R)$ for all 
functions of the form \emph{(\ref{eq:monotone})}, with $f$ a \emph{continuous}, convex function,
\item[\emph{(c)}] $\int_0^t {\rm d}u\,\, r_{g}^{\prime *}(u)\leq \int_0^t{\rm d}u\,\, r^*_{g}(u)\, $  for all $0\leq t\leq 1$,
\item[\emph{(d)}]$\sum_i g'_i\left(r'_i-t\right)_+\leq \sum_i {g}_i\left(r_i-t\right)_+\,$ for all $t\in\mathbb{R}$,
\item[\emph{(e)}]$\sum_i g'_i|r'_i-t|\leq \sum_i g_i|r_i-t|\,$ for all $t\in\mathbb{R}$.
\end{enumerate}
Here $(a)_+=\max\{a,0\}$ and $r^*_g(u)$ denotes the decreasing rearrangement of $r$ by $g$: $r^*_g(u)=\sup\{s:m_r(s)>u\}$ for $0\leq u\leq 1$, with $m_r(s)=\sum_{i:r_i>s}g_i$, $s\geq 0$. 
\end{theorem}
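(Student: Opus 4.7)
The plan is to establish the cycle $(a)\Rightarrow(b)\Rightarrow(d)\Leftrightarrow(e)\Leftrightarrow(c)$ and close with $(d)\Rightarrow(a)$. The implication $(a)\Rightarrow(b)$ is immediate from Proposition~\ref{thm:monotone}, since any continuous convex function is in particular convex. For $(b)\Rightarrow(d)$ and $(b)\Rightarrow(e)$, apply (b) to the continuous convex test functions $f_t(x)=(x-t)_+$ and $f_t(x)=|x-t|$ for each $t\in\mathbb{R}$. The equivalence $(d)\Leftrightarrow(e)$ then follows from the pointwise identity $|x-t|=2(x-t)_+-(x-t)$ together with the normalisations $\sum_i g_i r_i=\sum_i p_i=1$ (and likewise for $R'$): the linear piece contributes the same value $1-t$ on both sides of the inequality.

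For $(c)\Leftrightarrow(d)$, the equimeasurability of $r$ and its decreasing rearrangement $r^*_g$ with respect to the $g$-weighted counting measure yields
\begin{align}
\sum_i g_i (r_i-t)_+ = \int_0^1 \bigl(r^*_g(u)-t\bigr)_+\, du,
\end{align}
and a standard layer-cake / integration-by-parts calculation shows that knowing this family of integrals for all $t\in\mathbb{R}$ encodes the same information as knowing the concave partial integral $F(s)=\int_0^s r^*_g(u)\,du$ for all $s\in[0,1]$, the two functionals being essentially Legendre-conjugate.

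The main obstacle is the closing implication $(d)\Rightarrow(a)$, the $d$-majorization analogue of the Hardy-Littlewood-P\'olya theorem. The cleanest route is via a separating-hyperplane / LP-duality argument: the set $\mathcal{P}=\{Gp:G\text{ is Gibbs-stochastic with }Gg=g'\}$ is a compact convex polytope in $\mathbb{R}^{n'}$, so $p'\in\mathcal{P}$ is equivalent to $c^{\rm T}p'\leq\max_{q\in\mathcal{P}}c^{\rm T}q$ for every $c\in\mathbb{R}^{n'}$. Dualising the inner linear program, the optimal dual variables should reduce the right-hand maximum to a functional of the form $\sum_i g_i(r_i-t^*)_+$, modulo affine corrections encoding the column-sum and Gibbs-preservation constraints, so that any separating $c$ witnessing $p'\notin\mathcal{P}$ produces a threshold $t^*$ violating~(d). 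The delicate technical point is that source and target have different dimensions and different Gibbs weights, so the dual decomposition is more intricate than in the uniform (standard-majorization) case; an alternative constructive route, closer to Ruch, Schranner, and Seligman's original treatment, iterates elementary two-level Gibbs-stochastic ``$T$-transforms'' that monotonically close the discrepancy measured in (d).
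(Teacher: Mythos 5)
The paper itself offers no proof of this theorem: it is imported verbatim from Ruch, Schranner, and Seligman, and the surrounding text merely records which numbered condition of that reference corresponds to each of (a)--(e). So your attempt cannot be checked against an argument in the paper, only on its own merits. On those merits, the easy part of your cycle is sound: $(a)\Rightarrow(b)$ via Proposition~\ref{thm:monotone}; $(b)\Rightarrow(d),(e)$ by testing with the continuous convex functions $(x-t)_+$ and $|x-t|$; $(d)\Leftrightarrow(e)$ from $|x-t|=2(x-t)_+-(x-t)$ together with $\sum_i g_i r_i=\sum_i g'_i r'_i=1$; and $(c)\Leftrightarrow(d)$ from equimeasurability plus the fact that $t\mapsto\sum_i g_i(r_i-t)_+$ and the concave partial integral $s\mapsto\int_0^s r^*_g(u)\,{\rm d}u$ are conjugate, so pointwise domination of one family is equivalent to pointwise domination of the other. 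All of this is standard and correctly assembled.

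The genuine gap is $(d)\Rightarrow(a)$, which you rightly identify as the crux but do not prove: the sentence claiming the dual variables ``should reduce the right-hand maximum to a functional of the form $\sum_i g_i(r_i-t^*)_+$'' is the entire content of the hard direction, and as stated it is also slightly wrong, since the dual optimum is not a single hinge function. To close it along your route (i): for fixed $c\in\mathbb{R}^{n'}$, LP duality applied to $\max c^{\rm T}Gp$ over $G\geq 0$ with $e_{n'}^{\rm T}G=e_n^{\rm T}$ and $Gg=g'$ (feasible, e.g.\ $G=g'e_n^{\rm T}$) gives the value $\min_{\mu}\bigl[\sum_j g_j\,h_{c,\mu}(r_j)+\sum_i\mu_i g'_i\bigr]$ with $h_{c,\mu}(x)=\max_i(c_i x-\mu_i)$, a piecewise-linear convex function. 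Any such $h_{c,\mu}$ is an affine function plus a nonnegative combination of hinges $(x-t_k)_+$; affine pieces contribute equally to $R$ and $R'$ by the normalizations above, so condition (d) yields $\sum_i g'_i h_{c,\mu}(r'_i)\leq\sum_j g_j h_{c,\mu}(r_j)$, whence $c^{\rm T}p'\leq\sum_i g'_i h_{c,\mu}(r'_i)+\sum_i\mu_i g'_i$ is at most the dual value for every $\mu$ and every $c$. The separating-hyperplane step then places $p'$ in the polytope, proving (a). This computation---or, alternatively, an actual execution of the $T$-transform construction you mention---is what is missing; without one of them the proof is an outline rather than a proof.
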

Ruch, Schranner, and Seligman have established this statement in the more general setting of probability densities on the interval $[0,1]$. Condition (a) corresponds to definition (3f) in~\cite{ruch_mixing_1978}, (b) to (2a), (c) to (2e), (d) to (3c), and (e) to (3b). 
For the statement of (c) in the present discrete setting, we have however borrowed the more compact formulation due to Joe~\cite{joe_majorization_1990}. The integral in (c) defines the \emph{Lorenz curve} $L_R(t)$ for relative majorization~\cite{marshall_inequalities:_2009}. 
As with usual majorization, the Lorenz curve characterizes the conversion order in a simple geometric way, as shown in Figure~\ref{fig:lorenz}.

In fact, this is the same as the curve defined by Horodecki and Oppenheim~\cite{horodecki_fundamental_2011}, which is particular to the discrete setting and uses a different normalization. Their version has a much simpler definition, however, which is as follows (here we change the normalization). First, let $\pi$ be the permutation of indices of probabilities so that the sequence $(p_{\pi(i)}/g_{\pi(i)})_i$ is strictly non-increasing. Then the Lorenz curve is the piecewise linear function which joins the points given by the partial sums of $p_{\pi(i)}$ and $g_{\pi(i)}$~\cite{marshall_inequalities:_2009}, i.e.\ the points
\begin{align}
\left(t_k,L_R(t_k)\right)=\left(\sum_{i=1}^kg_{\pi(i)}\,,\,\sum_{i=1}^kp_{\pi(i)}\right).
\label{eq:dLcurve}
\end{align}

\begin{figure}[ht!]
\begin{tikzpicture}
    \begin{axis}[xmin=0,xmax=1,ymin=0,ymax=1,no markers,
    xtick={0,1},ytick={0,1},xlabel=$t$,ylabel=$L(t)$,
    label shift=-10pt,legend style={at={(.9,.1)},anchor=south east}]
    \addplot[draw=green!75!black,thick] table {L3.dat};
  \addlegendentry{$R_1$}
  \addplot[draw=red,thick] table {L1.dat};
  \addlegendentry{$R_2$}
   \addplot[draw=blue,thick] table {L2.dat};
 \addlegendentry{$R_3$}
 \addplot[draw=black,thick] table {L0.dat};
 \addlegendentry{$g$}
    \end{axis}
\end{tikzpicture}
\caption{\label{fig:lorenz} Lorenz curves of three resources $R_1$, $R_2$, $R_3$, and the Gibbs state $g$. A resource $R$ can be transformed into $\widetilde{R}$ if and only if the Lorenz curve of the former lies above that of the latter;  the Gibbs state has a flat Lorenz curve running from the origin to $(1,1)$. Here $R_1\succ R_3$ and $R_2\succ R_3$, but $R_1$ and $R_2$ are incomparable.}
\end{figure}
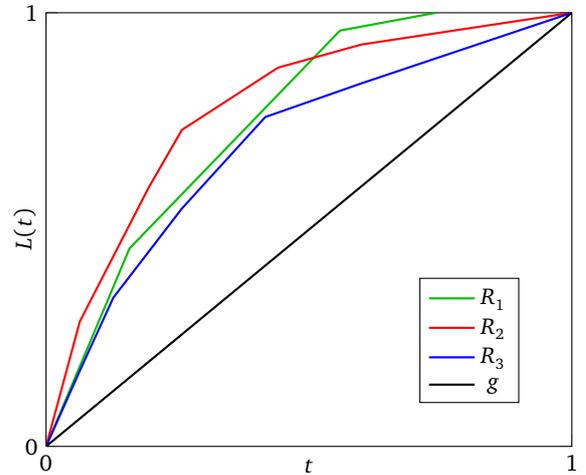

For a two-level system with energy gap $E$, we can easily work out the Lorenz curve explicitly. This is illustrative in its own right and will be useful later. The Gibbs state is described by $g=(\nicefrac1{1+e^{-\beta E}},\nicefrac1{1+e^{\beta E}})$, or equivalently $(Z_E(\beta)^{-1},Z_E(-\beta)^{-1})$, where $Z_E(\beta)={1+e^{-\beta E}}$. As there are just two levels, any quasiclassical state can be thought of as a Gibbs state at some temperature $\beta'$, so $p$ has the same form: $p=(\nicefrac1{1+e^{-\beta' E}},\nicefrac1{1+e^{\beta' E}})=(Z_E(\beta')^{-1},Z_E(-\beta')^{-1})$.

 Using \eqref{eq:dLcurve}, we need only give the single point at which the curve changes slope. To deal with the permutation $\pi$, we distinguish the two cases $\beta'>\beta$ and $\beta'<\beta$. In the former case, the resource state is colder than the   background Gibbs state; in the latter the resource is warmer, including situations in which $\beta'<0$ and there is a population inversion. When the resource is colder, no permutation in \eqref{eq:dLcurve} is needed, while the other case requires interchanging the two levels. One immediately finds that, for $\beta'>\beta$, the kink in the Lorenz curve occurs at the point $(t,L_R(t))=(Z_E(\beta)^{-1},Z_E(\beta')^{-1})$. For $\beta'<\beta$, the effect of interchanging the levels is just to take $\beta\rightarrow -\beta$  and $\beta'\rightarrow -\beta'$ in the previous analysis. The kink in the Lorenz curve is then at the point $(t,L_R(t))=(Z_E(-\beta)^{-1},Z_E(-\beta')^{-1})$. Figure~\ref{fig:lc1} shows curves for resources in the various regions.

\begin{figure}[h]
\begin{tikzpicture}
    \begin{axis}[
    xmin=0,xmax=1,ymin=0,ymax=1,no markers,
    xtick={0,.27,.73,1},ytick={0,1},
    extra y ticks={.85},
    extra y tick labels={$Z_E(-|\beta'|)^{-1}$},
    extra y tick style={tick label style={rotate=90}},
    xticklabels={0,$Z_E(-\beta)^{-1}$,$Z_E(\beta)^{-1}$,1},
    xlabel=$t$,ylabel=$L(t)$,
    label shift=-10pt,legend style={at={(.9,.1)},anchor=south east},
    grid=major
   ]

\filldraw[green!50!black,fill=green!50!black,draw opacity=0,fill opacity=0.25] (axis cs:0,.5) -- (axis cs:0,1) -- (axis cs:.5,1) -- (axis cs:.5,.5) -- cycle;  
  \draw[blue,draw opacity=0,fill=blue,fill opacity=0.25] (axis cs:.5,.5) -- (axis cs:0.5,1) -- (axis cs:1,1) -- cycle;
  \draw[red,draw opacity=0,fill=red,fill opacity=0.25] (axis cs:0,0) -- (axis cs:0,0.5) -- (axis cs:.5,.5) -- cycle;

  \addplot[draw=blue,thick] coordinates {
  (0,0)
  (.73,.85)
  (1,1)
  }; 
  \addlegendentry{$R_1$}
  
  \addplot[draw=red,thick] coordinates {
  (0,0)
  (.27,.45)
  (1,1)
  }; 
  \addlegendentry{$R_2$}
  
  \addplot[draw=green!50!black,thick] coordinates {
  (0,0)
  (.27,.85)
  (1,1)
  }; 
  \addlegendentry{$R_3$}
  
  \addplot[draw=black,thick] coordinates {
    (0,0)
    (1,1)
    };
  \addlegendentry{$g$}
  
  \end{axis}
\end{tikzpicture}
\caption{\label{fig:lc1} Lorenz curves of quasiclassical two-level resource states, whose Hamiltonian has an energy gap $E$, at background temperature $\beta>0$. Since there are just two states, any such resource state can be thought of as a Gibbs state at some temperature $\beta'$. $R_1$ denotes a state with $\beta'_1>\beta$, i.e.\ a colder system; the kink in the Lorenz curve of all resources of this form lies in the blue region for arbitrary $E>0$. $R_2$ has $\beta'_2< \beta$, which is hotter than the reference temperature; kink points of such resources fall in the red region. $R_3$ has the same inverse temperature as $R_1$, but negative, i.e.\ a state with population inversion; all resources with this property land in the green region. 
A resource in the excited state has $\beta'=-\infty$, while the state of an erased bit can be understood as the case $\beta'=\pm\infty$ and $\beta=0$.}
\end{figure}
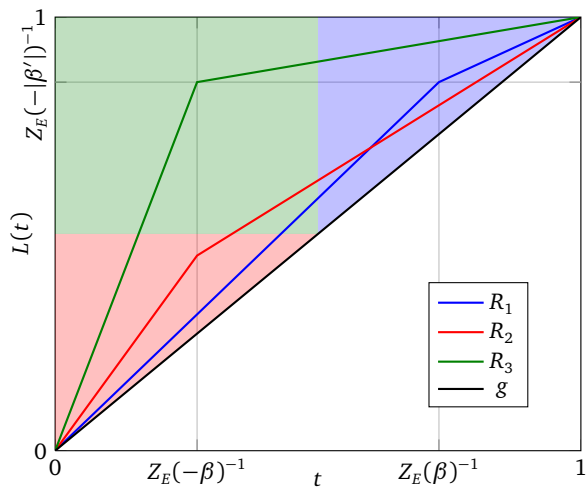
\vspace{-1cm}

\section{Work Cost of Transformations}
Given two resources $R$ and $R'$, suppose that it is not possible to transform $R$ into $R'$ using allowed thermal operations. Nonetheless, we expect that providing a sufficient amount of additional resources can make this transformation possible. Conversely, the transformation $R\rightarrow R'$ may be possible even if we additionally extract additional resources during the process. Traditionally, \emph{work} is standard resource in thermodynamics, often modelled as the change in the height of a weight. 

Here, we model the weight by an additional two-level system with energy gap $E$ in its excited state, and we denote this resource $A_E$. Then the work gain $W_{\rm gain}^\beta(R\to R')$ of the transformation can be defined as the largest $W$ such that 
\begin{align}A_{E}+R\succ A_{E+W}+R'\end{align} for some choice of $E>0$. %
That is, the transformation should produce the desired output $R'$ while increasing the gap of the additional system by $W$ and not producing any correlations between the two systems. If $W<0$, this represents the work cost required to drive the transformation.

It turns out that we may formulate a bound on the work cost or gain of implementing a the desired transformation in terms of a simple convex optimization, a linear program~\cite{barvinok_course_2002,boyd_convex_2004}. 
This approach is related to the results of Faist \etal~\cite{faist_quantitative_2012}, who studied the work cost of transformations between resources with completely degenerate Hamiltonians, but where preserving correlations with the environment are important. Closer to the present setting, Egloff~\etal~\cite{egloff_laws_2012} give an expression for the work cost which is related to the mixing distance of Ruch~\etal, but formulated in a somewhat different model of allowed operations than the set of thermal operations used here and which has a significantly more complicated proof. 

Before stating the result, let us first point out that while the question of whether the transformation $R\to R'$ is possible with thermal operations can be immediately formulated as a linear program, it is not so apparent that this holds for the work gain itself. To decide the former question, note that the three constraints of Theorem~\ref{thm:quasiorder} are linear in the entries of $G$, which must be themselves positive. Then the linear program which seeks to maximize $f(G)=0$ will find a feasible $G$ or certify that one does not exist. Specifically, if the optimal value of the dual problem turns out be unbounded, then there is no feasible $G$ (see, for instance, \cite[Theorem 8.2]{barvinok_course_2002}).

In a similar vein, we may formulate the task of finding $W_{\rm gain}^\beta(R\to R')$ as follows. First 
define $y=e^{-\beta W}$; we also drop the $\beta$ dependence in the partition function $Z_E$ since now its value is fixed. 
Then, for $G\in M_{n',n}(\mathbb{R})$, the set of real-valued $n'\times n$ matrices, $W_{\rm gain}^\beta(R\to R')=-\frac 1\beta \log y^*(R,R')$ in the optimization 
\begin{align}
\label{eq:nonlinprog}
\begin{array}{rrcl}
\textrm{find} 		&  y^*(R,R')				&=		&\min y\\[1mm]
\textrm{subject to} 	&  G(0,1)\otimes p		&=		&(0,1)\otimes p'\\
&G(1,e^{-\beta E})\otimes g&=&\frac{Z_E}{Z_{E+W}}(1,ye^{-\beta E})\otimes g'\\
&e_{2n'}^{\rm T}G 		&=& e_{2n}^{\rm T},\\
&y,E,G&\geq& 0,
\end{array}
\end{align}
where $G\geq 0$ is understood to mean that all components of $G$ are positive. Though the objective function is linear as before,  the constraints no longer are. 

Our main result is that the above can be transformed into a linear program valid in the limit $E\rightarrow \infty$:
\begin{theorem}
\label{thm:main}
Using thermal operations at inverse temperature $\beta$, a resource $R$ can be transformed into $R'$ in such a way that 
extracts an amount of work 
\begin{align}
W_{\rm gain}^{\beta} (R\rightarrow R')= -\tfrac1\beta\log x^*(R,R'),
\end{align}
for $x^*(R,R')$ the solution to the following linear program in the variables $x\in\mathbb{R}$ and $F\in M_{n',n}(\mathbb{R})$: 
\begin{align}
\label{eq:linprog}
\begin{array}{rrcl}
\textrm{find} &  x^*(R,R')&=&\min x\\[1mm]
\textrm{subject to} &  F p &= & p',\\
&F g &\leq & x g',\\
&e_{n'}^{\rm T}F&\leq &e_{n}^{\rm T},\\
&x,F&\geq& 0.
\end{array}
\end{align}
\end{theorem}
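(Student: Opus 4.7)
The plan is to show that the linear program value $x^*(R,R')$ equals $\lim_{E\to\infty}y^*(E)$, where $y^*(E)$ denotes the minimum of $y$ in \eqref{eq:nonlinprog} at fixed $E$---the precise sense of ``valid in the limit $E\to\infty$'' indicated before the theorem statement. I would establish the two inequalities $x^*\leq\lim y^*(E)$ and $\lim y^*(E)\leq x^*$ separately by explicit maps between feasible points of the two programs.

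For the forward direction, decompose any nonlinear-feasible $G$ as a $2\times 2$ block matrix of $n'\times n$ blocks $G_{ij}$, $i,j\in\{0,1\}$. The state constraint immediately yields $G_{11}p=p'$, and the second row of the Gibbs constraint, $G_{10}g+e^{-\beta E}G_{11}g=(Z_E/Z_{E+W})\,ye^{-\beta E}g'$, combined with $G_{10}\geq 0$, forces $G_{11}g\leq(Z_E/Z_{E+W})\,yg'$. The column-sum condition together with $G_{01}\geq 0$ gives $e_{n'}^{\rm T}G_{11}\leq e_n^{\rm T}$. Thus $F:=G_{11}$ and $x:=(Z_E/Z_{E+W})\,y$ form an LP-feasible pair, and sending $E\to\infty$ (so that $Z_E/Z_{E+W}\to 1$) yields $x^*\leq\lim_{E\to\infty}y^*(E)$.

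The reverse direction is the main technical step: lift any LP-feasible pair $(x,F)$ to a nonlinear-feasible $(y,E,G)$ with $y\to x$ as $E\to\infty$. The obstruction is the column-sum mismatch---\eqref{eq:linprog} permits $e_{n'}^{\rm T}F\leq e_n^{\rm T}$ while $G$ in \eqref{eq:nonlinprog} must be exactly column stochastic---and the bridge is the observation that the slack $s^{\rm T}:=e_n^{\rm T}-e_{n'}^{\rm T}F\geq 0$ satisfies $s^{\rm T}p=1-e_{n'}^{\rm T}(Fp)=1-e_{n'}^{\rm T}p'=0$, so $s_j>0$ forces $p_j=0$. Accordingly I would set $G_{11}:=F$, pick any non-negative $G_{01}$ with column sums $s^{\rm T}$ (for which $G_{01}p=0$ holds automatically since $s$ is supported where $p$ vanishes), choose $y:=(Z_{E+W}/Z_E)\,x$, and define the remaining two blocks as the rank-one matrices $G_{00}:=u_0\,e_n^{\rm T}$ and $G_{10}:=u_1\,e_n^{\rm T}$, where $u_0:=(Z_E/Z_{E+W})\,g'-e^{-\beta E}G_{01}g$ and $u_1:=e^{-\beta E}(xg'-Fg)$.

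Verification is then routine: the two Gibbs rows collapse using $e_n^{\rm T}g=1$; non-negativity of $u_1$ is precisely the LP constraint $Fg\leq xg'$; $u_0\to g'>0$ as $E\to\infty$ gives $u_0\geq 0$ for $E$ sufficiently large; and column stochasticity of $G_{00}+G_{10}$ follows from the partition-function identity $(Z_E/Z_{E+W})(1+ye^{-\beta E})=Z_E$, which makes $e_{n'}^{\rm T}(u_0+u_1)=1$. Sending $E\to\infty$ gives $y=(Z_{E+W}/Z_E)\,x\to x$, so $\lim_{E\to\infty}y^*(E)\leq x$ for every LP-feasible $x$, hence $\lim_{E\to\infty}y^*(E)\leq x^*$. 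Combined with the forward direction this gives the theorem.
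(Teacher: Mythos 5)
Your proposal is correct and follows essentially the same route as the paper's proof: decompose $G$ into four blocks, read off $F$ as the excited--excited block with $x=yZ_E/Z_{E+W}$, and in the converse direction rebuild $G$ from rank-one blocks constructed out of the slack vectors $xg'-Fg$ and $e_n^{\rm T}-e_{n'}^{\rm T}F$, sending $E\to\infty$ at the end. The only differences are cosmetic: you enforce the first Gibbs-row constraint directly through your choice of $G_{00}$ and then verify column stochasticity via the partition-function identity, where the paper parametrizes that block as $tg'e_{n}^{\rm T}$ and checks consistency of the two remaining constraints, and you replace the paper's two-case analysis of the $x$-versus-$y$ relation by the single observation that $Z_E/Z_{E+W}\to 1$.
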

\begin{proof}
	The proof proceeds by showing showing that the solution to \eqref{eq:nonlinprog} is both less than and greater than the solution to \eqref{eq:linprog}. We begin with the case $x^*(R,R')\leq y^*(R,R')$.
	
	Suppose we have a feasible $y$, $E$, and $G$ in \eqref{eq:nonlinprog}. Every $E'\geq E$ would also lead to a feasible $y$ and $G$, since the resource $A_{E'}$ can be transformed to $A_E$ by thermal operations, as can be inferred from their Lorenz curves described in Figure~\ref{fig:lc1}. This will allow us to consider the limit $E\rightarrow \infty$ in what follows. 
	
	Any feasible $G$ can be written in block form as 
\begin{align}G=
\begin{pmatrix}
G_{11} & G_{12}\\
G_{21} & G_{22}
\end{pmatrix}.
\end{align}
Writing out the constraints in \eqref{eq:nonlinprog} in terms of the block decomposition, we obtain the follwing three pairs of equations. The constraints involving the resource are
\begin{align}
	G_{21}p&=0 \quad \text{and}\label{eq:pconsta}\\ 
	 G_{22}p&=p'.\label{eq:pconstb}
\end{align}
The constrains involving the Gibbs state read
\begin{align}
	G_{11}g+e^{-\beta E}G_{12}g&=\tfrac{Z_E}{Z_{E+W}}g'\quad\text{and}\label{eq:gconsta}\\
	G_{21}g+e^{-\beta E}G_{22}g &=xe^{-\beta E}g',\label{eq:gconstb}
	\end{align}
	where $x=y\frac{Z_E}{Z_{E+W}}$. 
Finally, normalization requires
\begin{align}
	e_{n'}^{\rm T}G_{11}+e_{n'}^{\rm T}G_{21}&=e_n^{\rm T}\label{eq:nconsta}\quad\text{and}\\
	e_{n'}^{\rm T}G_{12}+e_{n'}^{\rm T}G_{22}&=e_n^{\rm T}\label{eq:nconstb}.
\end{align}

Since 
$e_{n'}^TG_{12}$, $G_{21}g$, and $e^{-\beta E}$ are positive, the constraints that involve $G_{22}$ (the latter in each pair) 
immediately imply those of \eqref{eq:linprog}, with $F=G_{22}$. Therefore, any feasible $y$, $E$, and $G$ leads to a feasible $x$ and $F$. 

The remaining question is how the value of $x$ is related to that of $y$, and what this implies about the relation between $x^*(R,R')$ and $y^*(R,R')$.
There are two cases to consider. If $y\geq 1$, i.e.\ when $W\leq 0$, it holds that $x\leq y$. Thus it immediately follows that $x^*(R,R')\leq y^*(R,R')$. 
On the other hand, for $y\leq 1$ ($W\geq 0$), $x\geq y$. Now we consider the large $E$ limit. For large enough $E$ it holds that $x\leq y(1+e^{-\beta E}(1-\frac12y)-\frac12ye^{-2\beta E})$. So again we can infer that $x^*(R,R')\leq y^*(R,R')$ in the limit $E\rightarrow \infty$.

To show that $x^*(R,R')\geq y^*(R,R')$ we will first construct a feasible combination of $y$, $E$, and $G$ for \eqref{eq:nonlinprog} from a feasible choice of $x$ and $F$ in \eqref{eq:linprog}. 
First, set $G_{22}=F$ to satsify \eqref{eq:pconstb}. Then define $v=x g'-F {g}$, for which $v\geq 0$ by design, and set $G_{21}=e^{-\beta E} ve_n^{\rm T}$ for some $E$ to be specified later. This choice satisfies \eqref {eq:gconstb} with $x=e^{-\beta W}\frac{Z_E}{Z_{E+W}}$, and we have fixed the bottom row of $G$. 

For the top row, define $u^{\rm T}=e_n^{\rm T}-e_{n'}^{\rm T}F$, which is also positive by construction. 
Since both $p$ and $p'$ are normalized and $Fp=p'$, $u^{\rm T}p=0$. Therefore, by setting
$G_{12}=g'u^{\rm T}$, both \eqref{eq:pconsta} and \eqref{eq:nconstb} are satisfied. 

Two constraints remain to be satisfied, both involving $G_{11}$. Setting $G_{11}=tg'e_{n}^T$ for some $t$ to be chosen later, the two constraints now simplify to
\begin{align}
	t+e^{-\beta E}u^{\rm T}g&=\tfrac{Z_E}{Z_{E+W}}\quad\text{and}\\
	t+e^{-\beta E}e_{n'}^{\rm T}v&=1.\end{align}
Let us first confirm that the two are consistent and so our choice of $G$ is valid. Subtracting the expressions on the lefthand side yields
\begin{align}
	e^{-\beta E}(u^{\rm T}g-e_{n'}^{\rm T}v)&=e^{-\beta E}(1-x)\\
	&=e^{-\beta E}(1-y\tfrac{Z_E}{Z_{E+W}})\\
	&=e^{-\beta E}\frac{Z_{E+W}-e^{-\beta W}Z_E}{Z_{E+W}}\\
	&=e^{-\beta E}\frac{1-e^{-\beta W}}{Z_{E+W}}\\
	&=\frac{Z_E-Z_{E+W}}{Z_{E+W}},
\end{align}
which is indeed the righthand side. 
Finally, we must choose a value of $E$ such that both constraints are satisfied for positive $t$; this is always possible since both $u^{\rm T}g$ and $e_{n'}^{\rm T}v$ are bounded. Note that if some value $E$ ensures $t\geq 0$, then any $E'\geq E$ does as well. 

We have shown that a feasible $x$, $F$ implies the existence of a feasible $y$, $E$, and $G$. As before, we must now investigate the implications for the value of the objective function. Writing $y$ in terms of $x$ and $E$ we have 
\begin{align}
y=\frac x{1+e^{-\beta E}(1-x)}
\end{align}
If $x\leq 1$, then $y\leq x$ and we can immediately infer $y^*(R,R')\leq x^*(R,R')$. If $x>1$, we again consider large enough $E$, for which $y\leq x(1-\frac12 e^{-\beta E}(1-x))$. In the limit $E\rightarrow \infty$, we then recover $y\leq x$ and therefore $y^*(R,R')\leq x^*(R,R')$. 
\end{proof}

\section{Work Value of Resources \& Landauer's Principle}
Using the above linear program we can recover the work value or work cost of a given resource $R$ found in~\cite{horodecki_fundamental_2011}, the amount of useful work that can be obtained from $R$ or the amount required to create $R$. They additionally study the approximate work cost and gain, but here we deal only with the exact case. In the case of the work value, we are interested in $W_{\rm gain}^{\beta}(R,R')$ with $R'$ trivial. Thus, $p'=g'=e_1$, so the first condition is $Fp=1$. The third and fourth constraints fix $0\leq F_i\leq 1$. Since $p$ is a probability distribution, $F_i=1$ for all $i$ where $p_i\neq 0$. Now we must satisfy $Fg\leq x$. The smallest feasible $x$ can be obtained by setting $F_i=0$ for all $i$ where $p_i=0$. The optimum is $x^*(R,R')=\sum_{i:p_i\neq 0}g_i$, giving
\begin{align}
	W_{\rm gain}^\beta(R)=-\frac1\beta\log \sum_{i:p_i\neq 0}g_i.
\end{align} 
This is equation 4 of~\cite{horodecki_fundamental_2011}, for the case $\epsilon=0$ (the exact work value). 

The work cost of preparing $R$, meanwhile, is simply $-W_{\rm gain}^{\beta} (R',R)$ with $R'$ trivial. 
Now the first condition is simply $F=p$, while the third is automatically satisfied. The second condition becomes $p\leq xg$, so $x\geq p_i/g_i$ for all $i$. Therefore $x^*(R',R)=\max_i p_i/g_i$, giving
\begin{align}
	W_{\rm cost}^\beta(R)=\frac1\beta\log\max_i \frac{p_i}{g_i}.
\end{align}
This is equation 8 of~\cite{horodecki_fundamental_2011}, again in the $\eps=0$ case. 

We also immediately recover Landauer's principle~\cite{landauer_irreversibility_1961,janzing_thermodynamic_2000}. Here the goal is to transform an arbitrary two-level resource $R$ having a trivial Hamiltonian to the state $(1,0)$; one can easily extend the approach to an arbitrary number of levels. The linear program in \eqref{eq:linprog} has constraints $Fp=(1,0)$ for all $p$, as well as $Fe_2\leq xe_2$ and $e_2^{\rm T}F\leq e_2^{\rm T}$. As the first has to hold for any $p$, it follows that 
\begin{align}
F=\begin{pmatrix}1 &1\\0 &0\end{pmatrix},
\end{align}
and therefore $x\leq 2$. This gives a work cost of the transformation of $W_{\rm erase}^\beta(R)=\frac1\beta \log 2$, as expected.

\section{Conclusions}
We have shown that the thermal quasiorder of resources in the resource theory of thermodynamics is closely related to the notion of $d$-majorization, and we have given a characterization of the work cost or gain of operations on resource states in the resource theory of thermodynamics. Here we have adopted a definition of work in which an amount of work $W$ is gained when the energy gap of a two-level system in its excited state is increased by an amount $W$. This is not the only reasonable choice; Horodecki and Oppenheim consider transforming a two-level system with gap $W$ from its ground to its excited state~\cite{horodecki_fundamental_2011}, while Faist \etal\ measure work in terms of erased bits~\cite{faist_quantitative_2012}. Nonetheless, following the proof of Theorem~\ref{thm:main} with these different definitions of work gain leads back to the same result. 

The analysis of the work gain presented here proceeds under the assumption that the transformation is perfect, and determines the guaranteed amount of work available. It would be useful to try to formulate a simple convex optimization for the amount of work which can be gained by implementing the desired transformation, but which is guaranteed only with a probability greater than $1-\eps$ for some given $\eps$. 
Faist \etal\ have found such a convex optimization for trivial Hamiltonians~\cite{faist_quantitative_2012}. While the result of Egloff \etal\ includes an $\eps$-dependence~\cite{egloff_laws_2012}, they do not formulate it as a convex optimization, and the complexity of computing their expression in any given instance is unclear. 

\section{acknowledgments}
I thank Johan \AA berg, Fr\'ed\'eric Dupuis, Philippe Faist, Renato Renner, 
Paul Skrzypczyk, Michael Walter, and Nicole Yunger Halpern for very helpful discussions. This work was supported 
by the Swiss National Science Foundation (through the National
Centre of Competence in Research `Quantum Science and Technology' and grant
No. 200020-135048) and by the European Research Council (grant 258932).

\bibliographystyle{apsrev4-1}
\bibliography{hypothermo}
\end{document}